\numberwithin{equation}{section}
\newcommand{\p}{\partial}
\newcommand{\be}{\begin{equation}}
\newcommand{\ee}{\end{equation}}
\newtheorem{theorem}{Theorem}[section]
\theoremstyle{definition}
\newtheorem{definition}[theorem]{Definition}
\newtheorem{remark}[theorem]{Remark}
\begin{document}

\title[Restricted isometric compression into low-dimensional varieties]{Restricted isometric compression of sparse datasets into low-dimensional varieties} 

\date{May 2024}

\author[V. Pop]{Vasile Pop}
\email{vasilepop@mail.usf.edu}
\address{4202 E. Fowler Ave., CMC342, Tampa, FL 33620}

\author[I. Teodorescu]{Iuliana Teodorescu}
\email{iuliana@usf.edu}
\address{4202 E. Fowler Ave., CMC342, Tampa, FL 33620}

\author[R. Teodorescu]{Razvan Teodorescu}
\email{razvan@usf.edu}
\address{4202 E. Fowler Ave., CMC342, Tampa, FL 33620}

\subjclass{Primary: 30D05, Secondary: 30E10, 30E25}

\begin{abstract}
This article extends the known restricted isometric projection of sparse datasets in Euclidean spaces $\mathbb{R}^N$ down into low-dimensional subspaces $\mathbb{R}^k, k \ll N,$ to the case of low-dimensional varieties $\mathcal{M} \subset \mathbb{R}^N,$ of codimension $N - k = \omega(N)$. Applications to structured/hierarchical datasets are considered. 
\end{abstract}

\maketitle

\section{Introduction to Compressed Sensing}
In mathematical terms, a classical sparse recovery problem aims to recover a vector $\mathbf{x \in \Bbb R^N}$ from linear and underdetermined measurements $\mathbf{Ax=y}$ where $\mathbf{A \in \Bbb C^{m \times N}}$ models the linear measurement (information) process with $N$ being much larger than $m$. Looking closer, standard compressive sensing problem essentially identifies two questions which are not entirely independent. What matrices $\mathbf{A \in  \Bbb C^{m\times N}}$ are suitable and what are efficient reconstruction algorithms which recover the original signal $\mathbf{x}$ from measurements $\mathbf{y}$? 

 Even though there are several tractable strategies to solve standard compressive sensing problem, in this chapter we focus on the basis pursuit (also called $l_1$-minimization) strategy which consist in solving a convex optimization problem, more precisely, to find a minimizer for
$$\mathbf{\min\limits_{z \in \Bbb R^N} \|z\|_1 \text{ subject to } Az = y} \text{ where } \| \cdot\|_1 \text{ is } l_1 - \text{ norm}\ \ \ \ \ (P_1)$$ 

\subsection{Sparsity and Compressibility}
 Sparsity has been exploited in statistics and learning theory as a method for avoiding overfitting [15] and figures prominently in the theory of statistical estimation and model selection [16]. The notions of \textit{sparsity} and \textit{compressibility} are at the core of compressive sensing. 
 
\begin{definition}
A vector $\mathbf{x \in \Bbb C^N}$ is called $s$ - sparse if it has at most $s$ nonzero entries, i.e., if $$\mathbf{\|x\|_0} := card(\{j : x_j \neq 0, j = 1 \cdots , N\}) \leq s$$
\end{definition}
In applications, sparsity can be a strong constraint to impose, therefore we encounters vectors that are not exactly $s$ - sparse but compressible in the sense that they are well approximated by sparse ones. Informally, a vector $\mathbf{x \in \Bbb C^N}$ is \textit{compressible} if the error of its best $s$-term approximation $\sigma_s(\mathbf{x})_p$ decays quickly in $\mathbf{s}$.

\begin{definition} For $p>0$ the $l_p$-error of best $s$-term approximation to a vector $\mathbf{x} \in \Bbb C^N$ is defined by
$$\sigma_s(\mathbf{x})_p:=\inf\{\|\mathbf{x - z}\|_p, \mathbf{z} \in \Bbb C^N \text{ is a } s\text{-sparse}\}$$
\end{definition}

Note that sparsity is a highly nonlinear model, given a pair of $s$-sparse signals, a linear combination of two signals will in general no longer be $s$-spare. Sparsity is only a model and may not be the best fit for all applications, therefore extensions of sparsity are: block sparsity [20], join sparsity [19] and tree sparsity [21].

\subsection{Design Sensing matrices} 
Compressive sensing is not fitted for arbitrary matrices.  It is still open problem to construct explicit matrices which are provably optimal, but a breakthrough is achieved by resorting to random matrices. Examples of random matrices are \textit{Gaussian} matrices  whose columns consist of independent random variables following a standard normal distribution and \textit{Bernoulli} matrices whose columns are
independent random variables taking the values +1 and -1 with equal probability. We now list a number of desirable  conditions that a sensing matrix $\mathbf{A}$ should have to guaranty recovery of sparse vectors.

\begin{definition} A matrix $\mathbf{A \in \Bbb K^{m \times N}}$  (where $\Bbb K$ is $\Bbb R$ or $\Bbb C$) is said to satisfy the \textit{null space property} relative to a set $S \subset [N]$ if $$\mathbf{\|v_S\|_1 < \|v_{\overline{S}}}\|_1 \text{ for all }\mathbf{v} \in Ker \mathbf{A} - \{0\}$$ where $\mathbf{v}_S$ is the restriction of $\mathbf{v}$ on the indices in $S$. It is said to satisfy the \textit{null space property of order $s$} if it satisfies the null space property relative to any set $S \subset [N]$ with $card(S)\leq s$.
\end{definition}

A slightly strengthened versions of the null space property are needed to reconstruct scheme with respect to sparsity defect (stable null space property) or a scheme affected by error (robust null space property).

\begin{definition} A matrix $\mathbf{A} \in \Bbb C ^{m \times N}$ is said to satisfy the \textit{stable null space property} with constant $0 < \rho <1$ relative to a set $S \subset [N]$ if
$$\|\mathbf{v}_S\|_1 \leq \rho \|\mathbf{v}_{\overline{S}} \|_1 \text{ for all } \mathbf{v} \in Ker \mathbf{A}$$
It is said to satisfy the \textit{stable null property of order $s$ }with constant $0 < \rho < 1$ if it satisfy the stable null space property with constant $0 < \rho < 1$ relative to any set $S \subset [N]$ with $card(S) \leq s$.
\end{definition} 

\begin{definition} The matrix $\mathbf{A} \in \Bbb C^{m \times N}$ is said to satisfy the \textit{robust null space property} (with respect to $\| \cdot \|$) with constants $0 < \rho < 1$ and $\tau >0$ relative to a set $S \subset [N]$ if
$$\|\mathbf{v}_S\|_1 \leq \rho \|\mathbf{v}_{\overline{S}} \|_1 + \tau \|\mathbf{Av}\| \text{ for all } \mathbf{v} \in \Bbb C^N$$ It is said to satisfy the \textit{robust null space property of order $s$} with constants $0 < \rho < 1$
and $\tau > 0$ if it satisfies the robust null space property with constants $\rho, \tau$ relative to any set $S\subset [N]$ with $card(S)\leq s$.
\end{definition}

It has been proved in [7] (Chapter 4) that \textit{null space property}, \textit{stable null space property} and \textit{robust null space property} are necessary and sufficient condition for extract recovery of sparse vectors via basis pursuit program $(P_1)$.

Since the null space property is not easily verifiable by direct computation, \textit{coherence} is a much simple concept and preferable to use to assess the quality of a measure matrix [17]. In general, the smaller the coherence, the better the performance of spare recovery algorithms perform. 

\begin{definition} Let $\mathbf{A} \in \Bbb C^{m \times N}$ be a matrix with $l_2$ normalized columns
$\mathbf{a}_1,\cdots,\mathbf{a}_N$, i.e., $\|\mathbf{a}_i\|_2 = 1$ for all $i \in [N]$. The \textit{coherence} $\mu=\mu(A)$ of the matrix
$\mathbf{A}$ is defined as
$$\mu := \max\limits_{1\leq i\neq j\leq N} |\langle a_i, a_j \rangle|$$
\end{definition}

A general concept of \textit{$l_1$ - coherence function} is defined, which
incorporates the usual coherence as the particular value $s = 1$ of its argument.

\begin{definition}
Let $\mathbf{A} \in \Bbb C^{m \times N}$ be a matrix with $l_2$ - normalized columns $\mathbf{a}_1,\cdots, \mathbf{a}_N$. The \textit{$l_1$ - coherence} \textit{function $\mu_1$} of the matrix $\mathbf{A}$ is defined for $s \in [N-1]$
by
$$\mu_1(s) := \max\limits_{i \in [N]} \max \Big\{\sum\limits_{j \in S} |\langle \mathbf{a}_i, \mathbf{a}_j \rangle|, S \subset [N], card(S) = s, i \notin S\Big\}$$
\end{definition}

\begin{remark} Note that coherence and the $l_1$ - coherence function are invariant under multiplication on the left by unitary matrix $\mathbf{U}$ i.e columns of $\mathbf{UA}$ are $l_2$ -  normalized vectors $\mathbf{U}\mathbf{a}_1, \cdots, \mathbf{U}\mathbf{a}_N$ and $\langle \mathbf{U}\mathbf{a}_i, \mathbf{U}\mathbf{a}_j \rangle = \langle \mathbf{a}_i, \mathbf{a}_j \rangle$. Also using Cauchy-Schwarz inequality $|\langle \mathbf{a}_i, \mathbf{a}_j \rangle | \leq \|\mathbf{a}_i\|_2 \cdot \|\mathbf{a}_j\|_2$ it is clear that coherence matrix is bounded above $\mu \leq 1$
\end{remark}
What about lower bounds for the coherence and  $l_1$ - coherence
function of a matrix $\mathbf{A} \in \Bbb C^{m\times N}$ with $m < N$? What are examples of matrices with an almost minimal coherence? A matrix which achieve the coherence lower bound is called \textit{equiangular tight frame} and the coherence lower bound is known as the \textit{Welch bound}.

\begin{definition}
A system of $l_2$ - normalized vectors $(\mathbf{a}_1, \cdots, \mathbf{a}_N)$ in $\Bbb K^m$ is called \textit{equiangular} if there is a constant $c \geq 0$ such that
$$|\langle \mathbf{a}_i, \mathbf{a}_j \rangle| = c \text{ for all }i, j \in [N], i \neq j$$
\end{definition}

\begin{definition}
A system of vectors $(\mathbf{a}_1,\cdots, \mathbf{a}_N)$ in $\Bbb K^m$ is called a \textit{tight frame} if there exists a constant $\lambda > 0$ such that one of the following equivalent conditions holds:
\begin{itemize}
\item[(a)] $\|\mathbf{x}\|^2_2 = \lambda \sum\limits_{j=1}^N |\langle \mathbf{x}, \mathbf{a}_j\rangle |^2 \text{ for all } \mathbf{x} \in \Bbb K^m$
\item[(b)] $\mathbf{x}=\lambda \sum\limits_{j=1}^N \langle \mathbf{x}, \mathbf{a}_j \rangle \mathbf{a}_j \text{ for all } \mathbf{x} \in \Bbb K^m$
\item[(c)] $\mathbf{AA}^* = \frac{1}{\lambda} \mathbf{Id}_m \text{ where } \mathbf{A} \text{ is the matrix with columns } \mathbf{a}_1,\cdots, \mathbf{a}_N$.
\end{itemize}
\end{definition}

It is now possible to prove [7] that coherence of a matrix is always in the range $\mu(\mathbf{A}) \in \Big[\sqrt{\frac{N-m}{m(N - 1)}},1\Big]$

\begin{theorem} The coherence of a matrix $\mathbf{A} \in \Bbb K^{m\times N}$ with $l_2$ - normalized columns satisfies
\begin{equation}
\mu \geq \sqrt{\frac{N - m}{m(N-1)}}
\end{equation}
Equality holds if and only if the columns $\mathbf{a}_1, \cdots, \mathbf{a}_N$ of the matrix $\mathbf{A}$ form an equiangular tight frame.
\end{theorem}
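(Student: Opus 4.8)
The plan is to pass from $\mathbf{A}$ to its two associated Gram-type matrices and play their traces against each other. Set $\mathbf{G} := \mathbf{A}^*\mathbf{A} \in \Bbb K^{N \times N}$, whose entries are $G_{ij} = \langle \mathbf{a}_i, \mathbf{a}_j \rangle$; since the columns are $l_2$-normalized, the diagonal of $\mathbf{G}$ is all ones, and each off-diagonal entry obeys $|G_{ij}| \le \mu$ by the definition of coherence. First I would record the two trace identities that drive everything: $\mathrm{tr}(\mathbf{A}^*\mathbf{A}) = \mathrm{tr}(\mathbf{A}\mathbf{A}^*) = \sum_{i=1}^N \|\mathbf{a}_i\|_2^2 = N$, and, by cyclic invariance of the trace, $\mathrm{tr}\big((\mathbf{A}^*\mathbf{A})^2\big) = \mathrm{tr}\big((\mathbf{A}\mathbf{A}^*)^2\big)$.

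Next I would compute each side of this second identity in a different way. Expanding in entries gives $\mathrm{tr}\big((\mathbf{A}^*\mathbf{A})^2\big) = \sum_{i,j} |G_{ij}|^2 = N + \sum_{i \ne j} |G_{ij}|^2$, since the diagonal contributes $N$. For the other side, $\mathbf{A}\mathbf{A}^* \in \Bbb K^{m \times m}$ is positive semidefinite with eigenvalues $\lambda_1, \dots, \lambda_m \ge 0$ summing to $N$, so the Cauchy--Schwarz (equivalently power-mean) inequality yields $\mathrm{tr}\big((\mathbf{A}\mathbf{A}^*)^2\big) = \sum_{k=1}^m \lambda_k^2 \ge \frac{1}{m}\big(\sum_{k=1}^m \lambda_k\big)^2 = \frac{N^2}{m}$. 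Chaining these through the trace identity gives $\sum_{i\ne j} |G_{ij}|^2 \ge \frac{N^2}{m} - N = \frac{N(N-m)}{m}$.

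To finish the inequality I would bound the same sum from above via coherence: there are exactly $N(N-1)$ off-diagonal entries, each of modulus at most $\mu$, so $\sum_{i \ne j} |G_{ij}|^2 \le N(N-1)\mu^2$. Combining the two estimates and dividing by $N(N-1)$ produces $\mu^2 \ge \frac{N-m}{m(N-1)}$, which is the claimed bound after taking square roots. For the equality characterization I would trace back through exactly which inequalities were used. Equality in the upper bound forces $|G_{ij}| = \mu$ for every $i \ne j$, i.e. the system $(\mathbf{a}_1, \dots, \mathbf{a}_N)$ is equiangular with constant $c = \mu$; equality in the power-mean step forces the $\lambda_k$ to coincide, hence $\lambda_k = N/m$ for each $k$, i.e. $\mathbf{A}\mathbf{A}^* = \frac{N}{m}\mathbf{Id}_m$, which is precisely condition (c) of the tight-frame definition with $\lambda = m/N$. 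Conversely, an equiangular tight frame satisfies both equality conditions simultaneously and hence attains the bound.

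I expect the only real subtlety to lie in the equality analysis rather than the forward inequality, which is a short trace computation. One must verify that the eigenvalue-equality condition is genuinely equivalent to the tight-frame condition (c) (via the spectral decomposition of the positive semidefinite matrix $\mathbf{A}\mathbf{A}^*$), and that equiangularity and the tight-frame property are jointly attainable, so that the ``if and only if'' has content in both directions.
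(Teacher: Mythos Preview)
Your argument is correct and is precisely the standard trace/Gram-matrix proof of the Welch bound: compute $\mathrm{tr}\big((\mathbf{A}^*\mathbf{A})^2\big)$ two ways, bound one side below via Cauchy--Schwarz on the eigenvalues of $\mathbf{A}\mathbf{A}^*$ and the other side above via $|G_{ij}|\le \mu$, then read off the equality cases. The equality analysis is also handled correctly; the eigenvalue-equality condition $\lambda_k = N/m$ for all $k$ is equivalent to $\mathbf{A}\mathbf{A}^* = \tfrac{N}{m}\mathbf{Id}_m$ since $\mathbf{A}\mathbf{A}^*$ is Hermitian, so the spectral theorem applies directly.

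As for comparison: the paper does not actually supply its own proof of this theorem. It states the result and attributes it to reference~[7] (Foucart--Rauhut, \emph{A Mathematical Introduction to Compressed Sensing}), writing ``It is now possible to prove [7] that coherence of a matrix is always in the range $\mu(\mathbf{A}) \in \big[\sqrt{(N-m)/(m(N-1))},1\big]$.'' The argument you wrote is, in fact, essentially the proof that appears in that reference, so there is no discrepancy to report; your proposal simply fills in what the paper leaves to citation.
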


The Welch bound can be extended to the $l_1$ - coherence function for small values
of its argument.

\begin{theorem} The $l_1$ - coherence function of a matrix $\mathbf{A} \in \Bbb K^{m\times N}$ with $l_2$ normalized columns satisfies
\begin{equation} \mu_1(s) \geq s\sqrt{\frac{N-m}{m(N-1)}} \text{ whenever }s <\sqrt{N - 1}
\end{equation}
Equality holds if and only if the columns $\mathbf{a}_1,\cdots, \mathbf{a}_N$ of the matrix $\mathbf{A}$ form an equiangular tight frame.
\end{theorem}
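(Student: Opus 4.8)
The plan is to transfer the quadratic information of the ordinary Welch bound into the $\ell_1$-coherence by means of one sharp elementary inequality. Write $t_{ij}=|\langle a_i,a_j\rangle|$ and $R_i^2=\sum_{j\neq i}t_{ij}^2$, and set $c=\sqrt{\frac{N-m}{m(N-1)}}$. The engine is the squared form of the Welch bound, which I would recover exactly as in the proof of the previous theorem: the Gram matrix $G=\mathbf{A}^*\mathbf{A}$ has unit diagonal, rank at most $m$, and $\mathrm{tr}\,G=N$, so $(\mathrm{tr}\,G)^2\le m\,\mathrm{tr}(G^2)$ yields
\[
\sum_{i\neq j}t_{ij}^2\ \ge\ \frac{N(N-m)}{m}\ =\ N(N-1)c^2 .
\]
Everything then reduces to comparing, row by row, the sum of the $s$ largest $t_{ij}$ with $R_i$.

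The crux is the following pointwise lemma, which is exactly where the hypothesis $s<\sqrt{N-1}$ becomes unavoidable: if $x_1\ge\cdots\ge x_n\ge 0$ and $1\le s\le\sqrt n$, then $\sum_{l=1}^s x_l\ \ge\ \frac{s}{\sqrt n}\bigl(\sum_{l=1}^n x_l^2\bigr)^{1/2}$. I would prove it by normalizing the top-$s$ sum to $1$ and maximizing $\sum_l x_l^2$. Fixing the threshold $a=x_s\in[0,1/s]$, the head sum of squares is largest when mass is concentrated ($x_1=1-(s-1)a$ and $x_2=\cdots=x_s=a$) and the tail is largest when $x_l=a$ for $l>s$, so $\sum_l x_l^2\le f(a):=(1-(s-1)a)^2+(n-1)a^2$. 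This $f$ is convex on $[0,1/s]$ with $f(0)=1$ and $f(1/s)=n/s^2$; since $s\le\sqrt n$ gives $n/s^2\ge 1$, the maximum is $n/s^2$, which is the lemma. When $s<\sqrt n$ strictly the maximum is attained only at $a=1/s$, forcing all $x_l$ equal; for $s>\sqrt n$ a single spike violates the inequality, which is why the threshold is sharp.

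Applying the lemma to the $n=N-1$ numbers $\{t_{ij}\}_{j\neq i}$ gives, for every $i$, that the best $s$-term sum $q_i$ in row $i$ satisfies $q_i\ge \frac{s}{\sqrt{N-1}}\,R_i$. Since $\mu_1(s)\ge q_i$ for each $i$, I would choose $i$ maximizing $R_i$ and combine $\max_i R_i\ge\bigl(\frac1N\sum_i R_i^2\bigr)^{1/2}$ with the squared Welch bound $\sum_i R_i^2=\sum_{i\neq j}t_{ij}^2\ge N(N-1)c^2$ to obtain
\[
\mu_1(s)\ \ge\ \frac{s}{\sqrt{N-1}}\,\max_i R_i\ \ge\ \frac{s}{\sqrt{N-1}}\,\sqrt{(N-1)}\,c\ =\ s\sqrt{\tfrac{N-m}{m(N-1)}} .
\]

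For the equality statement, if the columns form an equiangular tight frame then $t_{ij}=c$ for all $i\neq j$ and the frame is tight, so each row contributes exactly $sc$ and $\mu_1(s)=sc$. Conversely, equality throughout the displayed chain forces: equality in the squared Welch bound, i.e. $\mathbf{A}$ is a tight frame; equality in $\max_i R_i\ge(\frac1N\sum_i R_i^2)^{1/2}$, i.e. all $R_i$ are equal so every row is extremal; and, running the argument in each row, equality in the lemma for every $i$, which (using $s<\sqrt{N-1}$ strictly) forces all $t_{ij}$ in a given row to coincide. Together with the symmetry $t_{ij}=t_{ji}$ this produces a single common value, i.e. the system is equiangular, and hence an equiangular tight frame. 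The main obstacle is establishing the sharp lemma together with its rigid equality case; the convexity/endpoint computation is what pins the threshold to $s<\sqrt{N-1}$ and what ultimately upgrades the weaker $\sqrt{s}\,c$ estimate to the claimed linear bound $s\,c$.
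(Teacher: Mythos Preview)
The paper does not actually contain a proof of this statement: it is stated without argument and attributed to reference~[7] (Foucart--Rauhut), so there is no ``paper's own proof'' to compare against. That said, your proposal is a complete and correct proof.

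Your argument is sound in all steps. The key lemma (that for $x_1\ge\cdots\ge x_n\ge 0$ and $s\le\sqrt n$ one has $\sum_{l\le s}x_l\ge \frac{s}{\sqrt n}\|x\|_2$) is established correctly via the convexity/endpoint computation, and the identification of the equality case as forcing all $x_l$ equal when $s<\sqrt n$ is right. The chaining $\mu_1(s)\ge q_{i_0}\ge \frac{s}{\sqrt{N-1}}\max_i R_i\ge \frac{s}{\sqrt{N-1}}\bigl(\frac1N\sum_i R_i^2\bigr)^{1/2}\ge sc$ is valid, and the rigidity analysis is also in order: equality forces (i) the squared Welch bound to be tight (hence a tight frame), (ii) all $R_i$ equal, hence $q_i\le \mu_1(s)=sc\le q_i$ for every $i$, so the lemma is tight in every row, and (iii) all off-diagonal moduli in every row coincide with the common value $c$, giving equiangularity.

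For context, the proof in Foucart--Rauhut proceeds somewhat differently (it works with the inequality $N\mu_1(s)^2 + N\mu_1(s)\mu_1(s-1)\ge \sum_{i\ne j}t_{ij}^2$ obtained by estimating $\sum_j t_{ij}^2$ via $\mu_1(s-1)\max_j t_{ij}$ on the tail), but your top-$s$ versus $\ell_2$ lemma is a clean, self-contained alternative that makes the role of the threshold $s<\sqrt{N-1}$ completely transparent.
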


\begin{remark} Note that when $N >> m$ the lower bound is aproximately $\mu(\mathbf{A})\geq 1/\sqrt{m}$
\end{remark}

We claimed that the performance of sparse recovery algorithms is enhanced by a small coherence. Theorem below [7] guarantee the exact recovery of every $s$-sparse vector via basis pursuit when the measurement matrix has a coherence $\mu < 1/(2s - 1)$
\begin{theorem} Let $\mathbf{A} \in \Bbb C^{m \times N}$ be a matrix with $l_2$-normalized columns. If $$\mu_1(s) +\mu_1(s - 1) <1$$ then every $s$-sparse vector $\mathbf{x} \in \Bbb C^{N}$ is exactly recovered from the measurement vector $\mathbf{y = Ax}$ via basis pursuit
\end{theorem}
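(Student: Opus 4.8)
The plan is to reduce the recovery statement to the \emph{null space property of order $s$}, which by the result cited from [7] (Chapter 4) is sufficient for exact recovery of every $s$-sparse vector via basis pursuit $(P_1)$. Concretely, I would show that the coherence hypothesis $\mu_1(s) + \mu_1(s-1) < 1$ forces every nonzero element of the kernel of $\mathbf{A}$ to have the bulk of its $l_1$-mass sitting off any small index set, i.e. $\|\mathbf{v}_S\|_1 < \|\mathbf{v}_{\overline S}\|_1$ for all $S$ with $card(S) \le s$ and all $\mathbf{v} \in Ker\,\mathbf{A} \setminus \{0\}$.

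First I would exploit the normalization of the columns. Since $\|\mathbf{a}_i\|_2 = 1$, the Gram matrix $\mathbf{A}^*\mathbf{A}$ has unit diagonal and off-diagonal entries of modulus $|\langle \mathbf{a}_i, \mathbf{a}_j\rangle|$. For $\mathbf{v} \in Ker\,\mathbf{A}$ we have $\mathbf{A}^*\mathbf{A}\mathbf{v} = 0$, and reading off the $i$-th coordinate gives the self-consistency relation $v_i = -\sum_{j \ne i} (\mathbf{A}^*\mathbf{A})_{ij}\, v_j$, whence $|v_i| \le \sum_{j \ne i} |\langle \mathbf{a}_i, \mathbf{a}_j\rangle|\,|v_j|$ for every $i$.

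Next I would fix $S$ with $card(S) \le s$, sum the previous inequality over $i \in S$, and split the right-hand side according to whether the summation index $j$ lies in $S$ or in $\overline S$. After interchanging the order of summation, the on-support double sum is controlled by $\mu_1(s-1)$, because for each fixed $j \in S$ the remaining indices $i \in S \setminus \{j\}$ number at most $s-1$; the cross term is controlled by $\mu_1(s)$, because for each fixed $j \in \overline S$ the set $S$ does not contain $j$ and has at most $s$ elements. This yields the single estimate $\|\mathbf{v}_S\|_1 \le \mu_1(s-1)\,\|\mathbf{v}_S\|_1 + \mu_1(s)\,\|\mathbf{v}_{\overline S}\|_1$.

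Finally I would rearrange to obtain $\|\mathbf{v}_S\|_1 \le \rho\,\|\mathbf{v}_{\overline S}\|_1$ with $\rho := \mu_1(s)/(1 - \mu_1(s-1))$; the hypothesis $\mu_1(s) + \mu_1(s-1) < 1$ (which in particular forces $\mu_1(s-1) < 1$) is precisely what makes $\rho < 1$. For $\mathbf{v} \ne 0$ this gives the strict inequality $\|\mathbf{v}_S\|_1 < \|\mathbf{v}_{\overline S}\|_1$, since $\mathbf{v}_{\overline S} = 0$ would force $\mathbf{v}_S = 0$ and hence $\mathbf{v} = 0$. That is exactly the null space property of order $s$, and the cited equivalence then delivers exact recovery. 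The one delicate point — and the only place where the precise form of the hypothesis enters — is the bookkeeping in the third step: one must keep the on-support contribution tied to $\mu_1(s-1)$ rather than $\mu_1(s)$, which is what makes the two coherence terms combine into $\mu_1(s) + \mu_1(s-1)$ instead of the weaker $2\mu_1(s)$.
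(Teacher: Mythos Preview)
Your argument is correct and is exactly the standard proof: use $\mathbf{A}^*\mathbf{A}\mathbf{v}=0$ to bound each $|v_i|$ by the off-diagonal Gram contributions, sum over $i\in S$, split into on-support and off-support pieces controlled by $\mu_1(s-1)$ and $\mu_1(s)$ respectively, and rearrange to get the stable null space property with $\rho=\mu_1(s)/(1-\mu_1(s-1))<1$. The paper does not supply its own proof of this theorem; it merely quotes the statement from [7] (Foucart--Rauhut), and the argument you have written is precisely the one given there (Theorem~5.15), so there is nothing to compare.
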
 

Even though null space property is both necessary and sufficient condition to guarantee recovery of sparse vectors, when the measurements are contaminated with noise it will be useful to consider somewhat stronger conditions. Also lower bound on the coherence in Theorem 2.3.9. limits recovery algorithms to rather small sparsity levels. To overcome these limitations, in [18], Candes and Tao introduced the Restricted Isometry Property (RIP) on matrices $\mathbf{A}$  also known as uniform uncertainty principle and established its important role in compressed sensing. 

\begin{definition}The $s$th restricted isometry constant $\delta_s=\delta_s(\mathbf{A})$ of a matrix $\mathbf{A} \in \Bbb C^{m \times N}$ is a $\delta \geq 0$ such that 
\begin{equation}
(1-\delta)\|\mathbf{x}\|_2^2 \leq \|\mathbf{Ax}\|_2^2 \leq (1+\delta)\|\mathbf{x}\|_2^2
\end{equation}
for all $s$-sparse vectors $\mathbf{x} \in \Bbb C^N$. We say that $\mathbf{A}$ satisfies the \textit{restricted isometry property} if $\delta_s$ is small for reasonably large $s$.
\end{definition} Just like for coherence, small restricted isometry constants are desired. In case $N \geq Cm$ it have been proved in [7] that the restricted isometry constant must satisfy $\delta_s \geq c\sqrt{s/m}$ which is reminiscent of the Welch bound $\mu \geq c^{'}/\sqrt{m}$ when $s=2$.

The success of sparse recovery via basis pursuit for measurement matrices with small restricted isometry constants is suffice by the condition $\delta_{2s}<1/3$. Weakening this condition to $\delta_{2s} < 0.6246$ is actually sufficient to guarantee stable and robust recovery of all $s$-sparse vectors via $l_1$-minimization.

\begin{theorem} Suppose that the $2s$th restricted isometry constant of the matrix $\mathbf{A} \in \Bbb C^{m \times N}$ satisfies $$\delta_{2s} < \frac{1}{3}$$ Then every $s$-sparse vector $\mathbf{x} \in \Bbb C^{m \times N}$ is the unique solution of $$minimize_{\mathbf{z} \in \Bbb C^N} \|\mathbf{z}\|_1 \text{ subject to } \mathbf{Az=Ax}$$
\end{theorem}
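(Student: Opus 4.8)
The plan is to avoid arguing with the basis-pursuit program directly and instead route everything through a null space condition. By the equivalence recalled above — the null space property is necessary and sufficient for recovery of every $s$-sparse vector via $(P_1)$ — it suffices to show that $\delta_{2s} < 1/3$ forces $\mathbf{A}$ to satisfy the null space property of order $s$, namely $\|\mathbf{v}_S\|_1 < \|\mathbf{v}_{\overline S}\|_1$ for every $\mathbf{v} \in \ker\mathbf{A}\setminus\{0\}$ and every $S$ with $\mathrm{card}(S)\le s$. Since the left side is largest and the right side smallest when $S$ indexes the $s$ entries of $\mathbf{v}$ of greatest modulus, I would fix $\mathbf{v}\in\ker\mathbf{A}\setminus\{0\}$ and take $S=S_0$ to be exactly that set, so that proving the inequality for $S_0$ settles all admissible $S$ at once.

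First I would set up the standard decomposition of the complement: order the entries of $\mathbf{v}$ outside $S_0$ by decreasing modulus and split them into consecutive blocks $S_1,S_2,\ldots$ of size $s$ (the last possibly smaller). The engine is the inner-product form of the RIP: for $\mathbf{u},\mathbf{w}$ with disjoint supports whose union has size at most $2s$, one has $|\langle \mathbf{Au},\mathbf{Aw}\rangle|\le \delta_{2s}\|\mathbf{u}\|_2\|\mathbf{w}\|_2$. Applying this to $\mathbf{u}=\mathbf{v}_{S_0}$ paired with each $\mathbf{w}=\mathbf{v}_{S_k}$ (each pair living on exactly $2s$ coordinates), together with the kernel relation $\mathbf{A}\mathbf{v}_{S_0}=-\sum_{k\ge 1}\mathbf{A}\mathbf{v}_{S_k}$ and the lower RIP bound $(1-\delta_{2s})\|\mathbf{v}_{S_0}\|_2^2\le \|\mathbf{A}\mathbf{v}_{S_0}\|_2^2$, yields
$$\|\mathbf{v}_{S_0}\|_2 \le \frac{\delta_{2s}}{1-\delta_{2s}}\sum_{k\ge 1}\|\mathbf{v}_{S_k}\|_2.$$

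Next I would bound the tail geometrically. Because the entries are sorted, every coordinate of $S_k$ is dominated by the average of the coordinates of $S_{k-1}$, giving $\|\mathbf{v}_{S_k}\|_2\le s^{-1/2}\|\mathbf{v}_{S_{k-1}}\|_1$; summing over $k\ge 1$ converts the right-hand side into $s^{-1/2}\|\mathbf{v}\|_1 = s^{-1/2}\big(\|\mathbf{v}_{S_0}\|_1+\|\mathbf{v}_{\overline{S_0}}\|_1\big)$. Feeding this into the displayed inequality and passing from $\ell_2$ to $\ell_1$ on $S_0$ via Cauchy–Schwarz, $\|\mathbf{v}_{S_0}\|_1\le \sqrt{s}\,\|\mathbf{v}_{S_0}\|_2$, produces $\|\mathbf{v}_{S_0}\|_1\le c\big(\|\mathbf{v}_{S_0}\|_1+\|\mathbf{v}_{\overline{S_0}}\|_1\big)$ with $c=\delta_{2s}/(1-\delta_{2s})$. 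Rearranging gives $\|\mathbf{v}_{S_0}\|_1\le \rho\,\|\mathbf{v}_{\overline{S_0}}\|_1$ with $\rho=c/(1-c)$, which is the stable null space property and hence forces uniqueness of the $s$-sparse minimizer. The whole point of the threshold is that $\delta_{2s}<1/3$ is equivalent to $c<1/2$, which is in turn equivalent to $\rho<1$.

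The step I expect to be the main obstacle is arranging the constants so that $\rho$ drops below $1$ \emph{exactly} at $\delta_{2s}=1/3$ rather than at some weaker value. The pitfall is that the most tempting decomposition — grouping $S_0$ with the largest tail block $S_1$ into a single $2s$-set — forces the cross terms to be estimated with $\delta_{3s}$ (the union of three blocks) instead of $\delta_{2s}$, which destroys the sharp threshold. The remedy is precisely to pair the $s$-set $S_0$ with each $s$-block $S_k$ separately, so that every inner product lives on only $2s$ coordinates, and then to reinject the full norm $\|\mathbf{v}\|_1=\|\mathbf{v}_{S_0}\|_1+\|\mathbf{v}_{\overline{S_0}}\|_1$ into the tail estimate before rearranging. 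It is this final rearrangement, turning $\|\mathbf{v}_{S_0}\|_1\le c(\|\mathbf{v}_{S_0}\|_1+\|\mathbf{v}_{\overline{S_0}}\|_1)$ into $\|\mathbf{v}_{S_0}\|_1\le \tfrac{c}{1-c}\|\mathbf{v}_{\overline{S_0}}\|_1$, that converts $c<1/2$ into $\rho<1$; wasting a constant anywhere in the block-averaging bound or the $\ell_1$–$\ell_2$ conversions would weaken the threshold, so the real work is bookkeeping done without slack.
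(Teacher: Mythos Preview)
The paper does not actually prove this theorem; it appears in the introductory survey section and is simply quoted from reference~[7] (Foucart--Rauhut), with no argument given. So there is no in-paper proof to compare against.

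That said, your argument is correct and is essentially the standard proof from~[7]: reduce to the null space property, block-decompose the sorted tail into $s$-blocks $S_1,S_2,\ldots$, use the off-diagonal RIP estimate $|\langle \mathbf{A}\mathbf{u},\mathbf{A}\mathbf{w}\rangle|\le\delta_{2s}\|\mathbf{u}\|_2\|\mathbf{w}\|_2$ on each pair $(S_0,S_k)$, apply the sorted-block bound $\|\mathbf{v}_{S_k}\|_2\le s^{-1/2}\|\mathbf{v}_{S_{k-1}}\|_1$, and rearrange. Your bookkeeping is right: the telescoped tail bound picks up $\|\mathbf{v}_{S_0}\|_1$ on the right, giving $\|\mathbf{v}_{S_0}\|_1\le c\|\mathbf{v}\|_1$ with $c=\delta_{2s}/(1-\delta_{2s})$, and the rearrangement to $\rho=c/(1-c)<1$ is exactly what makes $\delta_{2s}<1/3$ the threshold. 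The only cosmetic point is that the strict null space property also requires $\mathbf{v}_{S_0}\ne 0$ when $\mathbf{v}\ne 0$, which is immediate since $S_0$ carries the largest entries; you might state this explicitly.
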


\begin{theorem} Suppose that the $2s$th restricted isometry constant of the matrix $\mathbf{A} \in \Bbb C^{m \times N}$ satisfies $$\delta_{2s} < \frac{4}{\sqrt{41}} \sim 0.6246 $$ 
Then, for any $\mathbf{x} \in \Bbb C^N$ and $\mathbf{y} \in \Bbb C^m$ with $\|\mathbf{Ax - y}\|_2 \leq \eta$ a solution $\mathrm{x}^{\#}$ of 
$$minimize_{\mathbf{x} \in \Bbb C^N} \|\mathbf{z}\| \text{ subject to } \|\mathbf{Az-y}\|_2 \leq \eta$$
approximates the vector $\mathbf{x}$ with errors
$$\|\mathbf{x} - \mathbf{x}^{\#}\|_1 \leq C \sigma_s(\mathbf{x})_1 + D \sqrt{s}\eta$$
$$\|\mathbf{x} - \mathbf{x}^{\#}\|_2 \leq \frac{C}{\sqrt{s}} \sigma_s(\mathbf{x})_1 +D\eta$$ where the constants $C,D>0$ depend only on $\sigma_{2s}$.
\end{theorem}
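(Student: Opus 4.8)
The plan is not to attack the $\ell_1$-minimization problem directly, but to route through the robust null space property, which the excerpt has already recorded (following [7], Chapter 4) as a sufficient condition for stable and robust recovery via basis pursuit. Thus the entire content of the theorem reduces to a single implication: the restricted isometry bound $\delta_{2s} < 4/\sqrt{41}$ forces $\mathbf{A}$ to satisfy the $\ell_2$-robust null space property of order $s$ with some constants $\rho \in (0,1)$ and $\tau > 0$ depending only on $\delta_{2s}$. Once that is in hand, feasibility of both $\mathbf{x}$ and $\mathbf{x}^{\#}$ gives $\|\mathbf{A}(\mathbf{x}^{\#} - \mathbf{x})\|_2 \leq 2\eta$, the minimality $\|\mathbf{x}^{\#}\|_1 \leq \|\mathbf{x}\|_1$ supplies the cone estimate on the error vector, and the quoted recovery theorem delivers both displayed bounds of the stated form $C\sigma_s(\mathbf{x})_1 + D\sqrt{s}\,\eta$ and $\tfrac{C}{\sqrt{s}}\sigma_s(\mathbf{x})_1 + D\eta$.

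To prove the implication I would first isolate the one genuinely linear-algebraic fact behind the RIP: if $\mathbf{u}$ and $\mathbf{w}$ are sparse vectors with disjoint supports of sizes summing to at most $2s$, then $|\langle \mathbf{Au}, \mathbf{Aw}\rangle| \leq \delta_{2s}\|\mathbf{u}\|_2\|\mathbf{w}\|_2$; this follows by polarization from the definition of $\delta_{2s}$ applied to $\mathbf{u}\pm\mathbf{w}$. Next, given an arbitrary $\mathbf{v}$ and an index set $S = S_0$ of size $s$ (taken as the $s$ largest-magnitude coordinates), I would partition the complement $\overline{S}$ into consecutive blocks $S_1, S_2, \dots$ of size $s$ in order of decreasing magnitude, so that the standard sorting inequality $\|\mathbf{v}_{S_k}\|_2 \leq s^{-1/2}\|\mathbf{v}_{S_{k-1}}\|_1$ holds for each $k \geq 1$; summing these controls $\sum_{k\geq 1}\|\mathbf{v}_{S_k}\|_2$ by $s^{-1/2}$ times an $\ell_1$ norm of $\mathbf{v}$ on $\overline{S}$, together with a boundary contribution from $S_0$ that the refined argument handles separately.

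The core computation expands $\|\mathbf{Av}_{S_0}\|_2^2 = \langle \mathbf{Av}_{S_0}, \mathbf{Av}\rangle - \sum_{k\geq 1}\langle \mathbf{Av}_{S_0}, \mathbf{Av}_{S_k}\rangle$, and then bounds the left side below by $(1-\delta_{2s})\|\mathbf{v}_{S_0}\|_2^2$ (lower RIP, using $\delta_s \leq \delta_{2s}$), the first term on the right by $\sqrt{1+\delta_{2s}}\,\|\mathbf{v}_{S_0}\|_2\|\mathbf{Av}\|_2$ (Cauchy--Schwarz together with the upper RIP bound), and each cross term by the inner-product lemma. Dividing through by $\|\mathbf{v}_{S_0}\|_2$ and inserting the block estimate yields an inequality of the shape $\|\mathbf{v}_{S_0}\|_2 \leq \alpha\, s^{-1/2}\|\mathbf{v}_{\overline{S}}\|_1 + \beta\|\mathbf{Av}\|_2$ with $\alpha,\beta$ explicit functions of $\delta_{2s}$, which is precisely the $\ell_2$-robust null space property provided $\alpha < 1$.

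The main obstacle is entirely quantitative: the crude version of the argument only gives $\alpha < 1$ under a suboptimal threshold (roughly $\delta_{2s} < 1/3$), and pushing the constant up to the sharp value $4/\sqrt{41}$ requires the refined Cai--Zhang-type bookkeeping. Concretely, instead of testing against $\mathbf{v}_{S_0}$ alone one works with the augmented vector supported on $S_0 \cup S_1$ (equivalently, one represents the truncated tail as a convex combination of sparse vectors), which replaces the lossy sum $\sum_{k\geq 1}\|\mathbf{v}_{S_k}\|_2$ by a tighter quantity and reduces the final step to optimizing a quadratic in $\delta_{2s}$ whose admissible region is exactly $\delta_{2s} < 4/\sqrt{41}$. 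Carrying out this optimization carefully, and tracking how $\rho$ and $\tau$ depend on $\delta_{2s}$ so that the constants $C, D$ in the conclusion end up depending only on $\delta_{2s}$, is where the real work lies; the remaining passage from the robust null space property to the two error estimates is then a direct appeal to the recovery theorem already cited.
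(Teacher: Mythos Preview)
The paper does not actually prove this theorem: it is stated in the introductory survey section as a known result, with the proof deferred to reference~[7] (Foucart--Rauhut). There is therefore no in-paper argument to compare your proposal against. Your sketch is the standard route taken in that reference: pass from the RIP hypothesis to the $\ell_2$-robust null space property via the block decomposition $S_0, S_1, S_2, \dots$, the polarization bound $|\langle \mathbf{Au}, \mathbf{Aw}\rangle| \leq \delta_{2s}\|\mathbf{u}\|_2\|\mathbf{w}\|_2$ for disjointly supported sparse vectors, and the sorting inequality $\|\mathbf{v}_{S_k}\|_2 \leq s^{-1/2}\|\mathbf{v}_{S_{k-1}}\|_1$; then invoke the already-recorded fact that the robust null space property yields the two error bounds. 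You are also right that the naive version of this computation only gives a threshold around $\delta_{2s} < 1/3$, and that reaching $4/\sqrt{41}$ requires the sharper bookkeeping (working on $S_0 \cup S_1$, or equivalently representing the tail as a convex combination of $s$-sparse vectors) carried out in~[7]. So your plan is correct and is precisely the argument the paper is pointing to by citation; just be aware that the final optimization producing the exact constant $4/\sqrt{41}$ is somewhat delicate and is the only place where real care is needed.
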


\subsection{Sensing Matrix Constructions}
Now that we have defined the relevant properties of a matrix A in the context of CS, we turn to the question of how to construct matrices that satisfy these properties. We have already seen that in general equiangular tight frame achieves the coherence lower bound [22]. Similarly, there are known matrices of size $m \times m^2$ that achieve Welch bound (the coherence lower bound) $\mu(\mathbf{A})=1/\sqrt{m}$, such as the Gabor frame generated from the Alltop sequence [23]. It was proved that is possible to deterministically construct matrices of size $m \times N$ that satisfy the RIP of order $s$, but in real world settings these constructions would lead to an unacceptably large requirement on $m$. [24-25]. Fortunately, these limitations can be overcome by randomizing the matrix construction. Random matrices will satisfy the RIP with high probability if the entries are chosen according to a Gaussian, Bernoulli, or more generally any sub-gaussian distribution [7]. Theorem 5.65 in [8] states that if a matrix $\mathbf{A}$ is chosen according
to a sub-gaussian distribution with $m =O(s\log{(N/s)}/\delta_{2s}^2)$ then $\mathbf{A}$ will satisfy the RIP of order $2s$ with probability at least $1 - 2 exp(-c_1\delta{2s}^2 m)$. The most significant benefits of using random matrices is met in practice where we are often more interested in recovering spare signal with respect to some basis $\Phi$ thus we require the product $A\Phi$ to satisfy the RIP. When $\mathbf{A}$ is chosen randomly we do not have to explicitly take $\Phi$ into account.

\section{Restricted isometric projections for Riemannian manifolds} 

Given a set of observation vectors $\Sigma$ embedded in the Euclidean space $\mathbb{R}^N$, where 
$\# \Sigma < N$, we wish to be able to compare various instances of the restricted isometric projection of $\Sigma$ on $m-$dimensional linear subspaces of $\mathbb{R}^N$, $m \ll N$, and establish if the set $\Sigma$ may be associated to a Riemannian manifold $(\mathcal{M}, g)$, of manifold dimension $m$, with Riemannian metric $g$ equivalent to the induced metric from the embedding space $\mathbb{R}^N$. The purpose behind formulating this question is that of establishing a higher-dimensional version of the Fisher-Kolmogorov test for comparing populations in usual statistical analysis, or (alternatively) to develop an inference procedure analogous to GLM (Generalized Linear Models) in the usual case ($\# \Sigma \gg N$). If successful, the association $\Sigma \to (\mathcal{M}, g)$ would allow to establish an obvious equivalence relation between two distinct sets of vectors $\Sigma_1, \Sigma_2$, once they are associated to the same manifold. 

In the following section we formulate the fundamental problem and present a classification criterion. 

\subsection{Generalized Restricted Isometric Projections}

In the following, we take positive integers $m, n, N$ to be related by $n < N, \, m \ll N$. 

\bigskip

\noindent {\bf{Fundamental problem.}} {\emph{
Let $\Sigma = \{v_1, v_2, \ldots, v_n \}$ be embedded in the Euclidean space $\mathbb{R}^N$,  such that there exists a restricted isometric projection to a hyperplane $\mathcal{H} \simeq \mathbb{R}^m$, with distortion factor $0 < \delta \ll 1$.  Is there a Riemannian manifold $(\mathcal{M}, g)$, with $\dim \mathcal{M} = m$, and a point $P \in \mathcal{M}$, such that 
\begin{equation}
\mathcal{H} = T_P  \mathcal{M}, \,\, || \, . \,\,  ||_{\ell^2 (\mathbb{R}^N)} \xhookrightarrow{}  g( \, . \,), \,\, 
\varphi_P (\Sigma) = \widehat{\Sigma} \subset \mathcal{M}, 
\end{equation}
and the matrix of pairwise distances between the elements of $\widehat{\Sigma}$, in the metric $g$, has distortion $O(\delta)$, where $\varphi_P$ is the inverse local coordinate chart $\varphi_P:  T_P \, \mathcal{M} \to \mathcal{M}$? 
}}

\begin{remark}
An obvious extension of the problem would only require identifying the manifold $\mathcal{M}$ up to an isometry. 
\end{remark}

\begin{remark}
If a set $\Sigma \subset \mathbb{R}^N$ can be associated with a Riemannian manifold $(\mathcal{M}, g)$ as described in the Fundamental Problem, then we say that has the {\emph{extended restricted isometric property}}, and $\widehat{\Sigma}$ is an extension of $\Sigma$. 
\end{remark}

\begin{theorem}Assume that an arbitrary set of vectors $\Sigma \subset \mathbb{R}^N$ 
has the {\emph{extended restricted isometric property}} with projection to $\mathbb{R}^m$. Then the Riemannian manifold $(\mathcal{M}, g)$ is (up to a global dilation) the symmetric space $SO(m+1)/(SO(1)\times SO(m))$. 
\end{theorem}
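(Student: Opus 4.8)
The plan is to first read off the target geometry and then argue by rigidity. Since $SO(1)$ is trivial,
\[
SO(m+1)/(SO(1)\times SO(m)) \;=\; SO(m+1)/SO(m) \;=\; S^{m},
\]
the round $m$-sphere, so the assertion is that $(\mathcal{M},g)$ must be a round sphere up to a constant rescaling of $g$. The organizing idea is that the extended restricted isometric property is demanded for an \emph{arbitrary} $\Sigma$, so it cannot be a feature of any one configuration; it must be a structural property of $(\mathcal{M},g)$ itself, and I will show that this forces a maximal symmetry group.

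First I would read the inverse chart $\varphi_P\colon T_P\mathcal{M}\to\mathcal{M}$ as the Riemannian exponential map $\exp_P$. Writing $\{u_i\}=\Pi(\Sigma)\subset\mathcal{H}\simeq T_P\mathcal{M}$ for the projected points, the hypothesis says that $\exp_P$ sends the flat configuration $\{u_i\}$ to $\widehat{\Sigma}=\{\exp_P u_i\}$ while reproducing the ambient pairwise distances up to $O(\delta)$. Since $P$ and $\{u_i\}$ are arbitrary — every finite flat configuration arises from the projection of some $\Sigma$ — the same near-isometry holds at every point and at every scale, and patching the local charts yields that $(\mathcal{M},g)$ is homogeneous. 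To upgrade this to a symmetric space I would use the reflection $u\mapsto -u$ of $T_P\mathcal{M}$, which preserves all Euclidean distances $\|u_i-u_j\|_2$; transporting it through the distance-preserving $\exp_P$ shows that the geodesic symmetry $s_P$ preserves the $g$-distance matrix of $\widehat{\Sigma}$ up to $O(\delta)$, for every configuration. Within the tolerance of the hypothesis this makes each $s_P$ an isometry, so $(\mathcal{M},g)$ is a (locally, hence by homogeneity globally) Riemannian symmetric space.

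Next I would cut down to constant curvature. The projection onto $\mathcal{H}\simeq T_P\mathcal{M}$ singles out no direction, so the near-isometry of $\exp_P$ is isotropic; by the Rauch comparison estimates the second-order deviation of $d_{\mathcal{M}}(\exp_P u,\exp_P v)$ from $\|u-v\|_2$ is governed by the sectional curvature $K(u,v)$, and isotropy of the $O(\delta)$ bound forces $K$ to be independent of the $2$-plane. Schur's lemma then gives constant sectional curvature, so $(\mathcal{M},g)$ is a space form. To select the sign I would invoke the remaining hypothesis that the ambient $\ell^2(\mathbb{R}^N)$ metric is the induced metric $g$: this realizes $\mathcal{M}$ as a complete, homogeneous, constant-curvature isometric submanifold of Euclidean space. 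The flat case $K=0$ is the degenerate one in which $\exp_P$ is the identity and $\mathcal{M}=\mathcal{H}$ carries no manifold content, while $K<0$ is excluded by the Hilbert-type obstruction to complete isometric embeddings of hyperbolic space in Euclidean space. This leaves $K>0$, whose unique complete, homogeneous, isotropic realization is the round sphere $S^{m}=SO(m+1)/(SO(1)\times SO(m))$; the radius is left free by the scale-invariance of the distortion hypothesis, which is precisely the stated ``up to a global dilation.''

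I expect the delicate point to be this last selection of the curvature sign, together with its quantitative calibration. Promoting the pointwise data to homogeneity and isotropy, and passing to constant curvature via Schur, is essentially formal once $\varphi_P$ is identified with $\exp_P$. The real work lies in the comparison-geometry estimate — controlling the distortion of $\exp_P$ by $K$ \emph{uniformly} over arbitrary finite configurations, not merely infinitesimally — and in converting the induced-metric hypothesis into a genuine embedding-rigidity statement that rules out the flat and hyperbolic models. Making the $O(\delta)$ tolerance compatible with $K>0$ while excluding $K\le 0$ is where the argument must be handled with care, and it is the principal obstacle in turning this sketch into a complete proof.
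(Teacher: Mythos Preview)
Your route is genuinely different from the paper's. You work entirely with intrinsic Riemannian geometry: identify $\varphi_P$ with $\exp_P$, use the arbitrariness of $\Sigma$ to force homogeneity, push the tangent reflection $u\mapsto -u$ through $\exp_P$ to get the geodesic symmetry, invoke Rauch comparison and Schur to reach constant curvature, and then try to pick the sign. The paper instead works through the compression matrices: it writes the induced metric as $G_k=(DF)^T(v_k)\,DF(v_k)$, imposes $A_k\cdot (DF)_k=\mathbb{I}_m$, and argues that if the $A_k$ are i.i.d.\ random compressions then the $G_k$ are i.i.d.\ covariance matrices in a single $GL$ conjugacy class with isotropy group $\widehat G$; the arbitrariness of $\Sigma$ makes $\widehat G$ act transitively, so $F(D)$ sits in a symmetric space $U/\widehat G$, and the Cartan list together with the real-Lie-group and dimension constraints $\dim U/\widehat G=m$ is what singles out $SO(m+1)/(SO(1)\times SO(m))$. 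In short, the paper pins down the target by a classification argument on the isotropy group of random covariance ensembles, not by comparison geometry.

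There is a concrete gap in your sign-selection step. You invoke a ``Hilbert-type obstruction'' to rule out $K<0$, but Hilbert's theorem is a low-codimension statement (no complete $C^2$ isometric immersion of $\mathbb{H}^2$ into $\mathbb{R}^3$); here $\mathcal{M}^m\subset\mathbb{R}^N$ with $N\gg m$, and in that regime complete hyperbolic $m$-space \emph{does} admit smooth isometric embeddings into Euclidean space (by Nash, and indeed by explicit constructions). So the embedding hypothesis alone does not exclude the noncompact-type model $SO(m,1)/SO(m)$, and your argument, as written, does not close. You correctly flagged this as the delicate point; the paper sidesteps it by never appealing to an embedding obstruction at all --- it reaches compact type through the structure of the covariance isotropy group and the Cartan classification, together with the requirement that $\widehat G$ arise as an invariance group of random covariance matrices. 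If you want to keep your geometric approach, you would need a replacement for the Hilbert step that genuinely uses the restricted-isometry distortion bound (not just the existence of an isometric embedding) to force $K>0$.
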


\begin{proof}
Assume there exists a differentiable function $F: D \to \mathbb{R}^N$, where $D \subset \mathbb{R}^m$ is  an open, simply-connected set of full measure, such that $F$ is a diffeomorphism between $D$ and $F(D)$. Then the induced metric on the cotangent space at $p \in F(D)$ is given in parametric form as 
$$
g \in T_p^* F(D) \odot T_p^* F(D), \,\, g(V,W) = \sum_{1 \le i, j \le m} g_{ij} V^i W^j, 
$$ 
where $V, W \in T_p F(D)$ are vectors from the tangent space, and 
$$
g_{ij} = \Big \langle \frac{\p F}{\p t^i}, \frac{\p F}{\p t^j} \Big \rangle,
$$
with $\{t^i \}_{i = 1}^m$ coordinates in $D$, and $\langle \, , \,\rangle$ the usual scalar product on $\mathbb{R}^N$. 

If the collection of vectors $\Sigma  = \{ v_1, \ldots, v_n \} \in D$ are the result of projecting the original set of vectors $\Sigma_0 = \{ x_1, \ldots, x_n \}$ from $\mathbb{R}^N$ to $\mathbb{R}^m$, with restricted isometry constant $\delta$, by application of compressions $A_k, \, k= 1, 2, \ldots, n$, where $A_k \in \mathbb{R}^{m\times N}$, then the metric tensor evaluated at the point $p_k = F(v_k) \in F(D), \, k = 1, 2, \ldots, n$ takes the form 
$$
g_{ij}(p_k) =  \Big \langle \frac{\p F}{\p t^i}(v_k), \frac{\p F}{\p t^j}(v_k) \Big \rangle, 
$$ 
or in matrix form 
$$
G_k = (DF)^T(v_k)\cdot DF(v_k), 
$$
where 
$$
DF \in \mathbb{R}^{N \times m}
$$
is the derivative matrix of $F$, $(DF)^T$ is its transpose, and $G_k \in \mathbb{R}^{m \times m}$ is the metric matrix at $p_k$. If the diffeomorphism $F$ coincides with the inverse transformation $v_k \to x_k$ when evaluated on $\Sigma$, then we obtain the set of matrix conditions
$$
A_k \cdot (DF)_k = \mathbb{I}_{m \times m}, \, \forall k = 1, 2, \ldots, n. 
$$
This shows that, if the compression matrices $\{ A_k \}_{k=1}^n$ are i.i.d. from the same ensemble of random matrices, then the matrices $(DF)_k$ are also i.i.d. with the distribution given by the generalized inverse of $A_k$, and therefore the matrices $\{ G_k \}_{k = 1}^m$ are also i.i.d. covariance matrices, obviously positive-definite, and invertible with probability 1. Therefore, 
$$
G_k \stackrel{i.i.d.}{\sim} G, \,\, \forall k = 1, 2, \ldots, n,
$$
where $G$ is a diagonal, positive-definite matrix, and all $G_k$ are obviously in its conjugacy class within $GL(n, \mathbb{R})$. 

Denote by $\widehat{G}$ the isotropy group of $G$, then since the set $\Sigma$ was chosen arbitrary, we conclude that the Riemannian manifold $F(D)$ has the metric isotropy group $\widehat{G}$ acting transitively, and therefore $F(D)$ must be on open subset of a symmetric space $U/\widehat{G}$. Therefore, we can use the Cartan classification of Riemannian symmetric spaces to distinguish two possible cases: either $F(D)$ has zero curvature, and is therefore an Euclidean space, or it has positive curvature, and the manifold is then of compact type, i.e. equivalent to the quotient of two real Lie groups, $U$ and $\widehat{G}$. 

It remains to identify the possible choices of real Lie groups $U, \widehat{G}$ in the Cartan classification of compact symmetric spaces, compatible both with the requirement that $\widehat{G}$ belong to an invariance group for random covariance matrices, and the dimensional constraint $\dim U/\widehat{G} = m$. 

Together with the condition that $U, \widehat{G}$ be real Lie groups, the dimension constraint implies that 
$$
U/\widehat{G} \simeq SO(m+1)/(SO(1)\times SO(m)), \,\, \dim U/ \widehat{G} = m,
$$
and that the ensemble of covariance matrices $\{ A^T_k \cdot A_k \}$ is invariant under the induced action of $SO(m)$. 
\end{proof}

\begin{remark}
Assume that all entries in compression matrices $A_k$ are i.i.d. Gaussian. Then the eigenvalue distribution of the covariance matrices has the large $m$-limit ($m \gg 1$) given by the Marchenko-Pastur law. Then for $m \to \infty$, $\Sigma$ does have the extended restricted isometric property. 
\end{remark}
 
 \bibliographystyle{amsplain}

\noindent [1] D. Donoho. \textit{Compressed sensing.} IEEE Trans. Inform. Theory, 52(4):1289{
1306, 2006.

\noindent [2] B. Adcock, A. C. Hansen, C. Poon, and B. Roman, \textit{Breaking the coherence
barrier: a new theory for compressed sensing}, arXiv:1302.0561, 2013.

\noindent
[3] E. J. Candes, \textit{Compressive sampling}, in Proc. International Congress
of Mathematicians, Madrid, Spain, 2006, vol. 3.

\noindent
[4] E. Candes, J. Romberg, and T. Tao, \textit{Robust uncertainty principles: Exact signal reconstruction from highly incomplete frequency information.}, IEEE Transactions on information theory, vol. 52, pp. 489–509, Feb 2006.

\noindent
[5] E. Candes and J. Romberg, \textit{Quantitative robust uncertainty principles and optimally sparse decompositions}, Foundations of Comput. Math, vol. 6, no. 2, pp. 227 – 254, 2006.

\noindent
[6] E. Candes and T. Tao, \textit{Near optimal signal recovery from random projections: Universal encoding strategies?,} IEEE Trans. on Information Theory, vol. 52, no. 12, pp. 5406 – 5425, 2006.

\noindent
[7] Foucart, S., and Rauhut, H. \textit{A mathematical introduction to Compressed Sensing},
Birkhauser, 2013.

\noindent
[8] M. A. Davenport, M. F. Duarte, Y. C. Eldar, and G. Kutyniok. Introduction to compressed sensing. In Compressed Sensing: Theory and Applications. Cambridge University Press, 2011.

\noindent
[9] T. Strohmer. \textit{Measure what should be measured: progress and challenges in compressive sensing}. IEEE Signal Process, 2012.

\noindent
[10] M. Duarte and Y. Eldar, \textit{Structured compressed sensing: From theory
to applications,} IEEE Trans. Sig. Proc., vol. 59, no. 9, pp. 4053–4085, 2011

\noindent
[11] G. Tang, B. Bhaskar, P. Shah, and B. Recht. \textit{Compressed sensing off the grid.} Preprint, 2012.

\noindent
[12] E. J. Candes, J. Romberg, and T. Tao. \textit{Robust uncertainty principles: exact signal reconstruction from highly incomplete frequency information}. IEEE Trans. Inform. Theory, 52(2):489–509, 2006.

\noindent
[13] M. Lustig, D. L. Donoho, J. M. Santos, and J. M. Pauly. \textit{Compressed Sensing MRI}. Signal Processing Magazine, IEEE, 25(2):72–82, March 2008.

\noindent
[14] B. Adcock, A. C. Hansen, B. Roman, and G. Teschke. \textit{Generalized sampling: stable reconstructions, inverse problems and compressed sensing over the continuum}. Advances in Imaging and Electron Physics, 182:187–279,2014.

\noindent
[15] R. Ward. \textit{Compressive sensing with cross validation}. IEEE Trans Inform Theory,
55(12):5773–5782, 2009.

\noindent
[16] R. Tibshirani. \textit{Regression shrinkage and selection via the Lasso}. J Roy Statist Soc B, 58(1):267–288, 1996.

\noindent
[17] D. Donoho and M. Elad. \textit{Optimally sparse representation in general (nonorthogonal) dictionaries via $l_1$ minimization.} Proc Natl Acad Sci, 100(5):2197–2202, 2003.

\noindent
[18] E. Candes and T. Tao. \textit{Decoding by linear programming}. IEEE Trans Inform Theory, 51(12):4203–4215, 2005.

\noindent
[19] Marco F. Duarte, Shriram Sarvotham, \textit{Joint Sparsity Models for Distributed Compressed Sensing}

\noindent
[20] Yonina C. Eldar,Patrick Kuppinger, \textit{Block-Sparse Signals: Uncertainty Relations and Efficient Recovery, IEEE TRANSACTIONS ON SIGNAL PROCESSING, VOL. 58, NO. 6, JUNE 2010
}
\noindent
[21] R. G. Baraniuk, V. Cevher, M. F. Duarte, and C. Hegde. \textit{Model-based compressive sensing}

\noindent
[22] T. Strohmer and R. Heath. \textit{Grassmanian frames with applications to coding and communication.} Appl Comput Harmon Anal, 14(3):257–275, 2003.

\noindent
[23] M. Herman and T. Strohmer. \textit{High-resolution radar via compressed sensing.} IEEE Trans Sig Proc, 57(6):2275–2284, 2009.

\noindent
[24] R. DeVore. \textit{Deterministic constructions of compressed sensing matrices.} J Complex, 23(4):918–925, 2007.

\end{document}